\date{June 2014}
\title{Developments in Formal Proofs}
\author{Thomas C. HALES}
\address{University of Pittsburgh\\
Department of Mathematics\\
Pittsburgh, PA 15260-2341\\
U.S.A.\\}
\email{hales@pitt.edu}
\theoremstyle{plain}
\newtheorem{theorem}[equation]{Theorem}
\def\op#1{{\operatorname{#1}}}
\def\ring#1{{\mathbb{#1}}}
\def\bool{\op{\tt bool}}
\def\tc{\hbox{:}}
\def\T{{\mathcal T}}
\begin{document}

\maketitle

{

\narrower{\it Si la math\'ematique formalis\'ee \'etait
aussi simple que le jeu
d'\'echecs, \ldots il n'y aurait plus qu'\`a r\'ediger nos d\'emonstrations dans ce
langage, comme l'auteur d'un trait\'e d'\'echecs \'ecrit dans sa notation.\ldots
Mais les choses sont loin d'\^etre aussi faciles, et point n'est besoin d'une
longue pratique pour s'apercevoir qu'un tel projet est absolument irr\'ealisable.
 -- Bourbaki, 1966 \cite{bourbaki1966theorie}}. % page 5. T. d. Ensembles.

}

\bigskip

A proof assistant is interactive computer software that humans use to
prepare scripts of mathematical proofs.  These proof scripts can be
parsed and verified directly from the fundamental rules of logic and
the foundational axioms of mathematics.  The technology underlying
proof assistants and formal proofs has been under development for
decades and grew out of efforts in the early twentieth century to
place mathematics on solid foundations.  Proof assistants have been
built upon various mathematical foundations, including
Zermelo-Fraenkel set theory (Mizar), Higher Order Logic (HOL), and
dependent type theory (Coq) \cite{Mizar}, \cite{HOLL}, \cite{Coq}.  A
{\it formal proof} is one that has been verified from first principles
(generally by computer).

This report will focus on three particular technological advances.
The HOL Light proof assistant will be used to illustrate the design of
a highly reliable system.  Today, proof assistants can verify large
bodies of advanced mathematics; and  as an example, we will turn to the
formal proof in Coq of the Feit-Thompson Odd Order theorem in group
theory.  Finally, we will discuss advances in the automation of
formal proofs, as implemented in proof assistants such as Mizar, Coq,
Isabelle, and HOL Light.

\section{Building a trustworthy system with HOL Light}

HOL Light is a lightweight implementation of a foundational system
based on Higher Order Logic (HOL).  Because it is such a lightweight
system, it is a natural system to use for explorations of the
reliability of formal proof assistants.

\subsection{Naive type theory}

HOL, the foundational system of mathematics that we describe in this
section, is based on a simply typed $\lambda$-calculus.  This
subsection describes a simple type theory in naive terms.

A salient feature of set theory is that it is so amorphous; everything
is a set: ordered pairs are sets, elements of sets are sets, and
functions between sets are sets.  Thus, it is meaningful in set theory
to ask bizarre questions such as whether a Turing machine is a minimal
surface.  In type theory, the very syntax of the language prohibits
this question.  Computer systems benefit from the extra structure
provided by types.

Naively, a simple type system is a countable collection of disjoint
nonempty sets called types.  The collection of types satisfies a
closure property: for every two types $A$ and $B$, there is a further
type, denoted $A\to B$, that can be identified with the set of
functions from $A$ to $B$.

In addition to types, there are terms, which are thought of as
elements of types.  Each term $t$ has a unique type $A$.  This
relationship between a term and its type is denoted $t:A$.  In
particular, $f:A\to B$ denotes a term $f$ of type $A\to B$.

There are variables that range over types called {\it type variables},
and another collection of variables that run over terms.

\subsection{Models of HOL}

The naive interpretation of types as sets can be made precise.  We
build a model of HOL in Zermelo-Fraenkel-Choice (ZFC) set theory to
prove that HOL is consistent assuming that ZFC is.  In this section,
we review this routine exercise in model theory.  At the same time, we
will give some indications of the structure of HOL Light.  See
\cite{harrison2009hol} for a more comprehensive introduction to HOL
Light.

The interpretation of variable-free types as sets is recursively
defined.  We use a superscript $M$ to mark the interpretation of a
type as a set.  Specifically, the types in HOL are generated by the
boolean type $\bool$ (which we interpret as a set
$\bool^M=\{\downvdash ,\upvdash \}$ of cardinality two with labeled
elements representing true and false) and the infinite type $I$
(which we interpret as a countably infinite set $I^M$).  Recursively,
for any two variable-free types $A$ and $B$, the type $A\to B$ is
interpreted as the set $(A\to B)^M$ of all functions from $A^M$ to
$B^M$.  We can arrange that the sets interpreting these types are all
disjoint.

In summary so far, we fix an interpretation $M$, determining a
countable collection $\T =\{A^M\}$ of nonempty sets in ZFC.  We now
extend our interpretation $M$ to a {\it valuation} $v=(M,v_1,v_2)$,
where $v_1$ is a function from the set of type variables in HOL to
$\T$, and $v_2$ is a function from the set of term variables in HOL to
$\bigcup\T$.  The valuation $v$ extends recursively to give a mapping
that assigns a set $A^{v}\in \T$ to every type $A$.  We require $v_2$
to be chosen so that whenever $x$ is a variable of type $A$, then
$x^{v_2} \in A^v$.  The valuation $v$ extends recursively to give a
mapping on all terms:
\[
t\mapsto t^v \in A^v \in \T, \quad \text{for all } t:A.
\]

For example, for every type $A$, there is a HOL term $(=)$ of type
$A\to (A\to \bool)$ representing equality for that type.\footnote{The
  convention in HOL is to curry functions: using the bijection $X^{Y
    \times Z} = (X^Z)^Y$ to write a function whose domain is a product
  as a function of a single argument taking values in a function
  space. In particular, equality is a curried function of type $A\to
  (A \to\bool)$ rather than a relation on $A\times A$.}  This term is
interpreted as the function in $(A\to (A\to\bool))^v$ that maps $a\in
A^{v}$ to the delta function $\delta_a$ supported at $a$ (where the
support of the function means the preimage of $\downvdash$).

A {\it sequent} is a pair $(L,t)$, traditionally written $L\vdash t$,
where $L$ is a finite set of terms called the {\it assumptions}, and
$t$ is a term called the {\it conclusion}.  The terms of $L$ and $t$
must all have type $\bool$.  If $L$ is empty, it is omitted from the
notation.

If $L$ is a finite set of boolean terms, and if $v$ is a valuation
extending $M$, write $L^{v}$ for the corresponding set of elements of
the set $\bool^M$.  We say a sequent $L \vdash t$ is {\it logically
  valid} if for every valuation $v$ for which every element of $L^{v}$
is $\downvdash \in \bool^M$, we also have $t^{v}=\downvdash $ in
$\bool^M$.

A {\it theorem} in HOL is a sequent that is generated from the
mathematical axioms and rules of logic.  There is a constant ${\tt
  FALSE}$ in HOL.  The following amounts to saying that HOL does not
prove {\tt FALSE}.

\begin{theorem} If ZFC is consistent, then HOL is consistent.
\end{theorem}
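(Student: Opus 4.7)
The plan is the standard soundness argument for a semantic interpretation: working inside ZFC, use the set-theoretic model $M$ sketched in the excerpt to interpret HOL, prove by induction on derivations that every theorem of HOL is logically valid, and then observe that ${\tt FALSE}$ fails to be logically valid and so cannot be a theorem. Since the whole construction lives inside ZFC, consistency of ZFC transports to consistency of HOL.

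First I would finish making the recursive extension of an arbitrary valuation $v=(M,v_1,v_2)$ precise, checking well-definedness of the interpretation of application and $\lambda$-abstraction using ZFC function comprehension: for $\lambda x:A.\,t:A\to B$, the interpretation is the function in $(A\to B)^{v}$ sending $a\in A^v$ to the value $t^{v'}$ where $v'$ agrees with $v$ except that $x^{v'_2}=a$. Next I would proceed by induction on the length of a HOL derivation. The base case requires checking that each primitive mathematical axiom of HOL (extensionality, $\beta$-conversion, the axiom of infinity, and the axiom of choice, together with whatever further primitives HOL Light takes) is logically valid under every valuation extending $M$: infinity is secured by the choice of a countably infinite $I^M$, and choice is secured by invoking AC in ZFC to produce a global selector. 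The inductive step requires showing that each primitive rule of inference (modus ponens, generalization, $\alpha$- and $\beta$-conversion, substitution of terms and types, abstraction, the deduction rule, etc.) sends logically valid premises to a logically valid conclusion.

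Once the soundness lemma is in hand, the conclusion is immediate: by design $({\tt FALSE})^{v}=\upvdash\neq\downvdash$ for every valuation $v$, so the sequent $\vdash {\tt FALSE}$ is not logically valid and therefore not provable. Hence HOL does not derive ${\tt FALSE}$, proving the theorem (modulo the consistency of ZFC, which was used to build $M$ and to carry out the induction).

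The main obstacle is not depth but bookkeeping around polymorphism and substitution: the same polymorphic term denotes different set-theoretic objects depending on how $v_1$ assigns sets to its free type variables, and the substitution rules for both terms and types must be shown to interact correctly with $v$. The subtlest individual axioms will be choice (and Hilbert's $\varepsilon$-operator if taken as primitive), since they require a uniform ZFC-level choice that respects the polymorphic typing, and extensionality, which forces the interpretation of function types as honest set-theoretic function spaces rather than intensional gadgets.
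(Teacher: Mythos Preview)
Your proposal is correct and follows essentially the same route as the paper's own proof sketch: verify that each axiom of HOL is logically valid in the set-theoretic model and that each inference rule preserves validity, then conclude that $\vdash {\tt FALSE}$ is not a theorem because ${\tt FALSE}^v=\upvdash$ for every valuation $v$. The only minor discrepancies are bookkeeping details---the paper lists exactly three axioms (infinity, choice, extensionality), interprets the choice operator by well-ordering each $A^M$ rather than via a generic global selector, and treats $\beta$-reduction as an inference rule rather than an axiom---but the overall strategy is identical.
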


\begin{proof}[Proof sketch]
  We give the proof in ZFC.  Here, HOL is treated purely syntactically
  as a set of strings in a formal language.

  We run through the rules of logic of HOL one by one and check that
  each one preserves validity.\footnote{There are ten such rules,
    giving the behavior of equality, $\lambda$-abstractions,
    $\beta$-reduction, and the discharge of assumptions.  For
    reference purposes, an appendix lists the inference rules of HOL.
    The analysis in this section omits the rules for the creation of
    new term constants and types.} For example, the reflexive law of
  equality in HOL states that for any term $t$ of any type $A$, we
  have a theorem $\vdash t = t$.  By the interpretation of equality
  described above, under any valuation $v$, this equation is
  interpreted as the value $\delta_{t^{v}}(t^{v})\in\bool^M$, which is
  $\downvdash $.  Hence the reflexive law preserves validity.  The
  other rules (transitivity of equality, and so forth) are checked
  similarly.

  We may well-order each set in the collection $\T$.  HOL posits a
  choice operator of type $(A\to\bool)\to A$ for every type $A$.  The
  well-ordering allows us to interpret HOL's choice operator as an
  operator that maps a function $f\in (A\to\bool)^v$ with nonempty
  support to the minimal element of its support.

  We run through the mathematical axioms of HOL and check their
  validity.  There are only three.  The {\it axiom of infinity}
  positing the existence of an infinite type $I$ is logically valid by
  our requirement to interpret $I$ as a countably infinite set.  The
  {\it axiom of choice} is logically valid by the well-ordering we
  have placed on the sets $A^M$.  The {\it axiom of extensionality}
  also holds in this model, because it holds for sets.  Thus, every
  axiom is logically valid.  Since all axioms are logically valid and
  every rule of inference preserves validity, every theorem is
  logically valid.

  There is a boolean constant $\op{\tt FALSE}$ in HOL. It is defined
  to be the term $\forall p.~p$.  An easy calculation based on its definition
  gives that $\op{\tt FALSE}^v = \upvdash$ for every valuation $v$.
  Hence, $\vdash \op{\tt FALSE}$ is not a logically valid sequent, and
  not a theorem.  This proves HOL consistent.
\end{proof}

\subsection{Computer implementation}

The bare consistency proof is just the beginning.  We can push matters
much further when the logic is implemented in computer code.

The HOL Light system is implemented in the Objective Caml programming
language, which is one of the many dialects of the ML language.  The
language ML (an acronym for Meta Language) was originally designed as
a metalanguage to automate mathematical proof commands \cite{Gor}.  It
is significant that the development of the language and the
development of proof assistants have progressed hand in hand, with
many of the same researchers participating in language design and
formal proofs.  The result is a programming language that can stand up
to intense mathematical scrutiny.

This parallel development of ML and proof assistants also means that
there are striking similarities between the syntax of ML and the
syntax of HOL.\footnote{HOL is a descendant of the LCF
  theorem prover that spurred the development of ML.}  The code
listing shows a few parallels between OCaml syntax and HOL syntax.

\begin{lstlisting}[keepspaces=true,stringstyle=\tt,basicstyle=\small,frame=single,framesep=8pt,mathescape,morekeywords={type,let,and,in,OCaml,HOL},columns=flexible]
OCaml                          versus    HOL
-----                                    ---
3:int                                    3:num
[0;1;2;3]                                [0;1;2;3]  
let x = 3 and y = 4 in x + y             let x = 3 and y = 4 in x + y
map (fun x -> x + 1) [0;1;2]             map (\x. x + 1) [0;1;2]
\end{lstlisting}

%(3:int);;
%[0;1;2;3];;
%let x = 3 and y = 4 in x + y;;
%map (fun x -> x + 1) [0;1;2];;

%`3:num`;;
%`[0;1;2;3]`;;
%`let x = 3 and y = 4 in x + y`;;
%`map (\x. x + 1) [0;1;2]`;;

In both ML and HOL every term has a type, and functions $f:A\to B$ of
the right type are required to convert from a term of one type $A$ to
another $B$.  One of Milner's key ideas in the design of ML was to
have an abstract datatype representing theorems.  The strict type
system of the language ML prevents the construction of any theorems
except through a carefully secured kernel that expresses the axioms
and rules of inference.  In a formal proof in HOL, absolutely every theorem -- no
matter how long or how complex -- is checked exhaustively by the
kernel.

\subsection{Verification of the code that implements HOL Light}

The code in the kernel that expresses the rules of HOL is of critical
importance.  Even a minor bug in the kernel might be exploited to
create an inconsistent system.  Fortunately, there are good reasons to
believe that the kernel does not have a single bug.

1.  The kernel is remarkably small.  It takes only about 400 lines of
computer code to express all of the kernel functions, including the
type system, the term constructors, sequents, the rules of inference,
the axioms, and theorems.  For example, it only takes seven lines of
computer code to describe the datatypes for HOL types, terms, and
theorems, as shown in the following listing of code~\cite{HOLL}.

\begin{lstlisting}[keepspaces=true,stringstyle=\tt,basicstyle=\small,frame=single,framesep=8pt,mathescape,morekeywords={type},columns=flexible]
  type hol_type = Tyvar of string
                | Tyapp of string *  hol_type list

  type term = Var of string * hol_type
            | Const of string * hol_type
            | Comb of term * term
            | Abs of term * term

  type thm = Sequent of (term list * term)
\end{lstlisting}

2.  The code has been written in a clean, readable style and has been
scrutinized by many computer scientists, logicians, and mathematicians
(including me).

3.  The correctness of the kernel has been formally verified, using
the HOL Light proof assistant itself (extended by a large cardinal)
\cite{HaSelf}.\footnote{A large cardinal axiom gives the existence of a large
  type corresponding to the set $\bigcup\T$ that we used above in the
  construction of a model.  By G\"odel, we do not expect to construct
  a model of HOL in HOL except by adding an axiom to strengthen the
  system.}  Specifically, a model of HOL can be built inside HOL
itself along the same lines as the model of HOL in ZFC described
above.

This formal verification of HOL in HOL goes further than the
construction of a model.  It also checks that the code implementing
the logic is bug free.  The code verification is based on the
parallels mentioned above between the metalanguage and HOL itself,
allowing the OCaml source code for the HOL kernel to be translated
back into HOL for verification.  A stricter standard of code
verification, based on the semantics of the programming language, is
discussed in the next subsection.  The proof of HOL in HOL removes
most practical doubts about the correctness of the kernel.

As independent corroboration, the correctness proof of the kernel of
HOL Light has been automatically translated into the HOL Zero and HOL4
assistants and reverified there \cite{adams2010introducing}.

\subsection{HOL in machine code}

The formal verification of HOL in HOL does not settle the issue of
trust once and for all.  The reliability of the proof assistant
ultimately rests on the entire computer environment in which the
software operates, including the semantics of programming languages,
compilers, operating systems, and hardware.  These issues should be a
concern of every mathematician who cares about the foundations of
mathematics at a time when the practice of mathematics is gradually
migrating to computers.

The current working goal of researchers is to create an unbroken
formally-verified chain extended from the HOL Light kernel all the way
down to machine code.  Most of the links in the chain have been forged.\footnote{A
  brilliant success has been the construction of a formally verified C
  compiler~\cite{CC}. Another remarkable project is the full formal
  verification of an operating system kernel~\cite{sel4}.  Concerning
  the formal verification of Coq and Milawa (an ACL2-like system), see
  \cite{barras2010sets} \cite{myreen2012reflective}.  In this survey
  article, we focus on the work done on formal verification related to
  the ML programming language, because it fits more closely with our
  narrative of building a trustworthy system in HOL.}

CakeML is a dialect of ML with mathematically rigorous operational
semantics.  According to its designers, ``Our overall goal for CakeML
is to provide the most secure system possible for running verified
software and other programs that require a high-assurance platform''
\cite{CakeML}.  The CakeML team has improved the HOL in HOL
verification \cite{myreen2013steps}, \cite{myreen2013proof},
\cite{kumar2014hol}.  This work closes various gaps in the
verification of the OCaml implementation of HOL, such as {\it object
  magic} (a mechanism that defeats the OCaml type system) and {\it
  mutable strings} (which allow a theorem to be edited to state
something different from what was proved).  The HOL system can be
extended by adding new definitions and new types.  The formal
verification now covers these extensions.  It verifies the soundness
of an implementation of the HOL kernel in CakeML, according to the
formally specified operational semantics of CakeML.

In related work, a verified compiler has been constructed for the
CakeML language \cite{CakeML}, \cite{sarkar2009semantics}.  This
brings us close to end-to-end verification of HOL Light, from its
high-level logical description down to execution in machine code.

\subsection{Disclaimers}

At the conclusion of this section, we make the usual disclaimers.
Without exception, all physical devices are prone to failure.  Soft
errors (typically caused by alpha particle interactions between memory
and its environment) produce a steady stream of errors depending on
complex factors such as hardware architecture and the height above sea
level at which the calculation is performed.  A formal proof in HOL of
the correctness of HOL carries the evident dangers of
self-justification.  Proofs of correctness are made relative to
mathematically precise idealized descriptions of things rather than
the physical objects themselves.

Notwithstanding all these issues, formalization reduces defect rates
in proofs to levels that are simply not possible by any other
available process.  Now that the formal proof of HOL in HOL has been
translated into other systems, the dangers of self-justification are
minimal.  Overall, formalized mathematics can now claim to be
orders of magnitude more reliable than traditionally refereed papers.

\section{Advanced developments in Coq: The Odd Order theorem}

In the previous section, we discussed the construction of a reliable
proof assistant.  In this section, we turn to a different proof
assistant, Coq, and look at the formalization of group theory.  We
warn the reader that the Coq system, which is based on the Calculus of
Inductive Constructions, is significantly different from the HOL
system in the previous section~\cite{Coq2014}, \cite{CiC}.

\bigskip

Feit and Thompson published their famous theorem in 1963~\cite{FT}.

\begin{theorem}[The Odd Order Theorem, Feit-Thompson]  All finite groups of odd order are solvable.
\end{theorem}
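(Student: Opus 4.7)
The plan is to argue by contradiction. Suppose that some finite group of odd order is not solvable, and let $G$ be such a group of least possible order. Then every proper subgroup and every proper quotient of $G$ is solvable by minimality, which forces $G$ to be a non-abelian simple group whose proper subgroups are themselves odd-order solvable groups --- a setting in which Hall's theorems and the classical theory of solvable groups apply in full. The remainder of the argument is devoted to deriving a contradiction from the existence of such a $G$, and it breaks naturally into a local-group-theoretic phase, a character-theoretic phase, and a final combinatorial phase.

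For the local phase, I would apply Thompson's transitivity and uniqueness theorems together with his $p$-stability results, exploiting crucially the fact that $G$ has no involutions. The objective is to classify the maximal subgroups $M$ of $G$ into five types according to the structure of their Fitting subgroup and of a distinguished Hall subgroup, and to show that the ``mixing'' of primes between maximals is extremely restricted: centralizers of elements of prime order sit inside unique maximal subgroups, and intersections of distinct maximals are rigid. Given this local skeleton, the character-theoretic phase then builds families of exceptional characters of $G$, induced from characters of Frobenius complements inside the maximals, and shows that they form coherent sets in the sense of Feit. The Dade isometry transports norms of class functions on the local subgroups up to $G$, yielding strong arithmetic constraints on $|G|$. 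Iterating over the possible types eventually traps $G$ into a configuration containing two elements $x$, $y$ of specific odd prime orders $p$, $q$ lying in opposite maximal subgroups, with the product $xy$ forced to satisfy a relation incompatible with $\langle x, y\rangle$ being a subgroup of any finite group of odd order; this is the final contradiction.

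The principal obstacle is, without doubt, the character-theoretic middle phase. Establishing coherence of the required sets of generalized characters, controlling the signs and multiplicities of exceptional characters, and proving the Dade-style isometry in sufficient generality to cover all five types of maximal subgroups are precisely what makes the original proof so dense. They depend on a substantial infrastructure --- ordinary and modular character theory, Frobenius reciprocity, integrality in cyclotomic fields, and detailed Frobenius-group arithmetic --- that must be built up before the main combinatorial argument can even be phrased. A formal version of the proof has no choice but to reconstruct this entire edifice before the contradiction can be assembled.
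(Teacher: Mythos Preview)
Your three-phase outline --- minimal counterexample giving a simple group with solvable proper subgroups, then local analysis pinning down the maximal subgroups, then character theory with coherence/isometry arguments, then a final generators-and-relations elimination --- is exactly the architecture the paper sketches (following Bender--Glauberman and Peterfalvi). The one place the paper is more specific than you is the endgame: rather than a vague incompatible relation in $\langle x,y\rangle$, the residual configuration is identified concretely as maximal subgroups isomorphic to the semilinear affine group $x\mapsto a\,\sigma(x)+b$ over a finite field (with $a$ of norm~$1$), and it is Galois theory that dispatches this case in the version that was actually formalized.
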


\begin{lstlisting}[keepspaces=true,stringstyle=\tt,basicstyle=\small,frame=single,framesep=8pt,mathescape,morekeywords={Theorem},columns=flexible]
Theorem Feit_Thompson (gT : finGroup Type) (G : {group gT}) :
  odd #|G| -> solvable G.
\end{lstlisting}

Solomon writes this about the significance of the Odd Order theorem,
``This short sentence and its long proof were a moment in the
evolution of finite group theory analogous to the emergence of fish
onto dry land.  Nothing like it had happened before; nothing quite
like it has happened since'' \cite{Sol01}.

The Odd Order paper broke through various barriers that cleared the
way for a remarkably fruitful massive research collaboration that
eventually led to the classification of finite simple groups.
Significantly, the 255 page Odd Order proof triggered an avalanche of
long complex proofs related to the classification, culminating in the
1221 page classification of quasi-thin groups by Aschbacher and
Smith~\cite{aschbacher2004classification}.
% references at https://en.wikipedia.org/wiki/Quasithin_group. 

Background material for the proof appears in textbooks {\it Finite
  Groups}~\cite{gorenstein2007finite}, {\it Finite Group Theory}
\cite{aschbacher2000finite}, and {\it Character Theory of Finite
  Groups}~\cite{isaacs2013character}.  The necessary background
includes a basic graduate-level understanding of rings, modules,
linear and multilinear algebra (including direct sums, tensor products
and determinants); fields, algebraic closures, and basic Galois
theory; the structure theorems of Sylow and Hall; Jordan-H\"older;
Wedderburn's structure theorem for semisimple algebras; representation
theory with induced representations, Schur's lemma, Clifford theory,
and Maschke's theorem; and character theory including Frobenius
reciprocity, and orthogonality.

I will not say much about the actual proof of the Odd Order theorem.
We have now had more than fifty years to assimilate the ideas of the
proof.  There are numerous surveys of the proof
\cite[p. 450]{gorenstein2007finite}, \cite{glauberman1999new},
\cite{thompson1968nonsolvable}, \cite{Sol01}.  The original proof of
Feit and Thompson was later reworked and simplified in two books
\cite{bender1994local}, \cite{peterfalvi2000character}.

In very brief terms, the proof starts by assuming a minimal
counterexample to the statement.  This counterexample will be a finite
simple group $G$ of odd order in which every proper subgroup is
solvable.  Each maximal subgroup of $G$ is {\it $p$-local}; that is,
the normalizer of a nontrivial subgroup of $G$ of $p$-power order, for
some prime $p$.  The first major part of the proof of the Odd Order
theorem consists in establishing restrictions on the structure of the
maximal subgroups and their embeddings into $G$.  In the special case
when $G$ is a {\it $CN$-group} (a group whose centralizers of
non-identity elements are all nilpotent), the maximal subgroups are
{\it Frobenius groups} \cite{FHT}.  A Frobenius group is a nontrivial
semidirect product $K\rtimes H$, where $H$ is disjoint from its
conjugates, and $H$ is its own normalizer
\cite[Th. 7.7]{gorenstein2007finite}.  In the general case, the
strategy is to prove that as many of the maximal subgroups as possible
are as close as possible to being Frobenius groups.  This strategy
encounters many exceptions and detours, but eventually the local
analysis shows that the maximal subgroups are mostly
Frobenius-like \cite[Sec. 16]{bender1994local}.

% Frobenius : http://ssr2.msr-inria.inria.fr/~jenkins/current/frobenius.html

The second major portion of the proof uses the complex character
theory of $G$ to obtain inequalities over the real numbers that
restrict and ultimately eliminate all possibilities for $G$.

The final part of the proof uses generators and relations to remove a
special case in which there are maximal subgroups isomorphic to the
group of all permutations of a finite field $\ring{F}$ of the form $x
\mapsto a\, \sigma (x) + b$, where $\sigma$ is a field automorphism,
$a,b\in\ring{F}$, and $a$ is an element of norm $1$. It is in this
part of the proof that Galois theory is most relevant.

\subsection{Formal verification}

The Odd Order theorem has been formally verified in the Coq proof
assistant by a team led by Gonthier  \cite{gonthier2013machine}.
This is an extraordinary milestone in the history of formal proofs.
What is particularly significant about the formalization itself?

1. The Odd Order theorem itself has never been seriously questioned,
but the premature announcement of the classification of finite simple
groups drew sustained criticism.  Gorenstein wrote, ``In February
1981, the classification of the finite simple groups was completed''
\cite[page 1]{gorenstein2007finite}; and yet essential work on the
classification continued for decades after that date
\cite{aschbacher2004status}.  Serre wrote in 2004 that ``for years I
have been arguing with group theorists who claimed that the
`Classification Theorem' was a `theorem', i.e.\ had been proved''
\cite{raussen2004interview}.  Because of the prominence of the Odd
Order theorem within the classification, this formal proof plants a
flag in the middle of the larger classification project.

% http://www.ams.org/notices/200407/fea-aschbacher.pdf

2. Traditional methods of refereeing mathematical research become
strained, when proofs are unusually long or computer assisted.  There
is a Wikipedia page listing numerous proofs in mathematics that set
records for length~\cite{WikiLong}.  Long papers cluster in certain
areas such as finite group theory related to the classification, the
Langlands program, and graph theory.  The Odd Order theorem, the
Four-Color theorem (Gonthier's previous formalization project), and
the Kepler conjecture are all on that list.  These three recent
formalization projects send a clear defiant message to mathematicians:
no matter how long or how complex your mathematical proofs may be, we
can formalize them.

Absolutely no technological barriers prevent the formalization of
large parts of the mathematical corpus.  The issues now are how to
make the technology more efficient, cost effective, and user friendly.

\subsection{Constructive proof of the Odd Order theorem}

The formal proof of the Odd Order theorem is based on the
second-generation proof described in two books cited above
\cite{bender1994local}, \cite{peterfalvi2000character}.  If we were to
translate the formal proof of Odd Order back into a humanly readable
book, the most notable difference would be that the original proof
uses classical logic, but the formal proof uses constructive logic.
In particular, a constructive proof avoids proofs by contradiction and
does not invoke  the law of excluded middle $\phi\lor \neg \phi$ as
a general principle.

Several different strategies were used to translate the paper proof
into a constructive computer proof.

1.  Often, mathematicians use proof by contradiction out of sheer
laziness when a direct proof would work equally well.  ``Let $G$ be
the finite group of minimal order that is a counterexample'' is
replaced with an induction on the order of the group, and so forth.

2. In \cite{peterfalvi2000character}, the character theory of finite
groups relies heavily on vector spaces over $\ring{C}$ and complex
conjugation.  In the constructive formal proof, the corresponding
vector spaces over an algebraic closure $\bar{\ring{Q}}$ are used.  An
algebraic closure of $\ring{Q}$ is obtained as the union of an
increasing tower of number fields $\ring{Q}(\alpha_i)$, for
$i\in\ring{N}$, each an explicitly constructed splitting field over
the preceding one.  Complex conjugation is replaced with conjugation
with respect to a maximal real subfield of $\bar{\ring{Q}}$, also
constructed as an explicit union of real subfields of finite degree
over $\ring{Q}$.

3.  Some constructions are justified by the decision procedure for the
first-order theory of algebraically closed
fields~\cite{cohen2010formal}, \cite{Ha09}.  For example, the socle of
a module is defined as the sum of its simple submodules.  This
decision procedure gives a constructive test for the simplicity of a
submodule, leading to a constructive definition of the socle (for a
finitely generated $k[G]$-module where $G$ is a finite group and $k$
is an algebraically closed field) \cite{gonthier2011point}.

On a related note, the construction of a simple submodule of a given
module $M$ requires choice.  By confining attention to modules that
are countable as sets, countable choice suffices, which is provable in
Coq from the well-foundedness of the natural numbers.

4.  Various intermediate results in Odd Order theorem use classical
logic, but the use of classical logic is eliminated from the final
statement of the Odd Order theorem.  The wedding of constructive and
classical logics is arranged through a predicate $\op{\it
  classically}~\phi$ that marks every result proved by classical
logic.  The definition of $\op{\it classically}~\phi$ is
\[
\op{classically}~\phi : Prop := \forall {b:\bool}, (\phi\to b) \to b.
\]
When $\phi$ is quantifier free, this is essentially the usual double
negation translation of a classical formula into constructive logic.
The definition of {\it classically} works in such a way that $\phi$
may be used as an assumption in the proof of any boolean goal $b$,
whenever $\op{\it classically}~\phi$ is known.

5.  Function extensionality does not hold in Coq; that is, it is not
provable that two functions are equal, if they are provably equal on
every input.  This creates various complications.  There are no
general quotient types: types corresponding to the set of classes on a
type under an equivalence relation.  Instead, Coq introduces {\it
  setoids} -- a set together with an equivalence relation on that set
-- without passing to the quotient.  Unfortunately, setoids carry a
certain overhead, which was not acceptable in the Odd Order proof,
which makes ubiquitous use of quotient groups.

The group theory is developed in the context of {\it finite types}
with {\it decidable equality}; (that is, the type is equipped with
boolean procedure that decides whether any two elements of that type
are equal).  In this context, function extensionality holds and
genuine quotients groups can be formed.

6.  For all its advantages, at times the type theory of Coq is best
forgotten.  In particular, in arguments involving a finite number of
finite groups, there would be excessive overhead in a separate type
for each group, a separate binary operation on each group, and
explicit homomorphisms embedding subgroups into groups.  For such
arguments, the finitely many finite groups are often considered as
subgroups of a larger ambient finite group with a common binary
operation and type.  These ambient group arguments are one of the
biggest stylistic differences between the Coq proof and the original.

\subsection{Library of abstract algebra}

Most of the work for this project went into the development of the
libraries in abstract algebra and the related computer infrastructure.
The libraries include formal proofs of all the necessary background
material described at the beginning of this section, from Frobenius
reciprocity to Wedderburn's theorem.  From a software engineering
point of view, it has been a major undertaking to get the computer to
understand algebra at a level comparable to that of a working
mathematician, and for it all to be formally justified.  This includes
much of the implicit domain knowledge that is required to read a
proof.  When we write $g*h$, we expect the computer to infer the
correct binary operation.  When we take the intersection $K\cap H$ of
two subgroups, we expect it to go without saying that the intersection
is again a subgroup.

\section{Automating formalization}

One way to make proof assistants more usable is to increase the amount
of automation.  Ideally, the human should provide the high level
structure of a proof as it is done in a traditionally published paper,
and the computer should use search algorithms to insert the low-level
reasoning.  As technology has developed, computers have become capable
of providing more and more of the low-level reasoning.  For an
overview of automated reasoning in formal proofs, see \cite{Ha09}.

As we explained in the first section of this report, it is a matter of
great importance for the underlying logic of the proof assistant to
be sound and for the implementation in code to be free of bugs.  Proof
search algorithms are based on fundamentally different design
considerations than the proof assistants themselves.  With search
algorithms, we can take a more relaxed approach, with the
understanding that their results must eventually be verified by the
proof assistant before being admitted as part of formal proof.  

There is no general decision procedure for mathematics as a whole.
Some decision procedures such as the Presburger algorithm for the
additive first-order theory of natural numbers or quantifier
elimination for real closed fields only have a limited practical value
for formal proofs because they are so slow.

There are dangers in combining partial automation with human
interaction.  There is a psychological tendency for the human to
``wait and see'' rather than race against the computer, when the human
and computer are both engaged in the same task.  An automated
procedure might make sudden complicated changes to the proof state
that strand the user in uncharted territory.  For these reasons, it is
good to have procedures that fail quickly when they fail and that take
a fully-documented circumscribed step forward when they succeed.

\subsection{First-order logic}

A {\it first-order formula with equality in the predicate calculus} is
a formula built from variables $x,y,\ldots$, logical operations $\neg,
\land, \Rightarrow,\ldots$, $n$-ary function symbols $f,g\ldots$,
$n$-ary predicate symbols $P, Q,\ldots$ and quantified variables
$\forall x, \exists y,\ldots$, according to syntactic rules that we
will not spell out here.  For example,
\[
((\exists x.~P(x)) \Rightarrow (\forall y.~Q(y))) \Leftrightarrow (\forall x\forall y.~P(x)\Rightarrow Q(y))
\]
is a first-order formula.  In contrast to higher-order logics
discussed earlier, in a first-order formula, quantifiers over function
symbols and predicate symbols are not allowed.
%The equality predicate  symbol $(=)$ is treated differently from other relation symbols.  Equality is required to have the
%standard interpretation, but other relation symbols are uninterpreted.

We are interested in algorithms that are {\it refutation complete};
that is, given the input of an unsatisfiable first-order formula, the
algorithm outputs a proof of its unsatisfiability.  A refutation
complete algorithm can also produce proofs of logically valid
formulas, by refuting the negation of the formula.

What has been understood for a long time are rules of inference ({\it
  resolution} together with special inferences that deal with
equality) that are refutation complete.  Recent research seeks to find
efficient ways to manage and contain the vast explosion of clauses
that can result by repeated application of the inference rules.
Poorly designed algorithms exhaust available memory before a
refutation is found.

Today, automated first-order theorem provers can handle problems with
thousands of axioms~\cite{bachmair2001resolution}, \cite{Ha09}.  In
recent years, the {\it Vampire} theorem prover has dominated the
annual competition, with other strong contenders such as the {\it
  SPASS} and {\it E} theorem
provers~\cite{kovacs2013first},~\cite{riazanov2002design},
\cite{WeidenbachDFKSW09}, \cite{Sch02-AICOMM}.

%The competition
%problems in the LTB division can contain millions of premises.
% See our "trophies" at www.vprover.org.

%\subsection{Hammers}

There are some automated theorem provers based on higher-order
logic % LEO and TPS,
but the most prevalent practice is to translate problems from
higher-order logic into first-order logic, solve them there, then
translate the answers back into higher-order logic, with an automated
reconstruction of a formal proof inside the proof
assistant.\footnote{There are many issues that come up in translating
  formulas in higher-order logic into first-order logic that I will
  not discuss here.  In particular, some of the procedures might not
  be logically sound.}

An early example of translating proofs in this way is Harrison's MESON
procedure (based on model elimination), implemented in HOL Light.  In
this approach, the human supplies all the relevant theorems, and the
automated procedure generates the logical glue that combines the given
theorems into a proof.

Paulson advocates shipping the goal together with large libraries of
theorems to a first-order theorem prover, and letting it figure out which
of many theorems to use in the proof~\cite{Paar}.  The human is no longer
forced to search through the libraries to find the relevant theorems.
The method is called a {\it sledgehammer} for its ability to deliver a
powerful blow and its complete lack of finesse.

In a refinement of this approach, an automated heuristic procedure
selects a few hundred theorems deemed to be the most relevant and
ships those to the theorem prover.  For example, to prove a new trig
identity, we might select other trig identities as likely to be
relevant.  There is an art to selecting relevant premises wisely, and
machine learning algorithms can be trained to do this effectively
\cite{urban2010evaluation},  
\cite{kaliszyk2012learning}, \cite{kaliszyk2013automated},
\cite{alama2014premise}.

Here is one simple example of a theorem proved in this
way~\cite{kaliszyk2014learning}.  The theorem states that the convex
hull of any three points in $\ring{R}^3$ is a set of measure zero.
The automated procedure was able to comb through large libraries of
previously established theorems to locate the following relevant
facts: the convex hull is a subset of the affine hull; the affine hull
of three points in $\ring{R}^3$ is a set of measure zero; and a subset
of a set of measure zero has measure zero.  The procedure then
combines these facts with the necessary logic to give a fully
automated formal proof.

Sledgehammers and machine learning algorithms have led to visible
success.  Fully automated procedures can prove 40\% of the theorems in
the Mizar math library, 47\% of the HOL Light/Flyspeck libraries, with
comparable rates in Isabelle \cite{DBLP:journals/corr/KaliszykU13b},
\cite{kaliszyk2014learning}, \cite{bohme2010sledgehammer}.  These
automation rates represent an enormous savings in human labor.

\subsection{SAT Solvers and SMT}

Boolean satisfiability (SAT) solvers implement algorithms that test
for the satisfiability of formulas in propositional
logic~\cite{marques2009conflict}.  Is there an assignment of truth
values to propositional variables for which a given formula evaluates
to true?  SMT ({\it satisfiability modulo theories}) algorithms combine
the propositional reasoning of SAT solvers with reasoning within a
given theory~\cite{barrett2009satisfiability}.  The input to the
algorithm is a propositional formula in which the boolean variables
have been replaced with predicates from the given first-order theory.  An SMT
algorithm searches for a valuation of the predicates in the theory
that satisfies the propositional formula.

For example, the following calculation comes up in the proof of the
Odd Order theorem.  Let $G$ be a finite group, and let $(\beta_{ij})$
be a matrix of virtual characters, with each entry a linear
combination $\pm \chi \pm \chi' \pm \chi''$ of three distinct irreducible
characters.  Assume the matrix has at least four rows and two columns.
Assume the inner product relations
\[
\langle \beta_{ij},\beta_{i'j'}\rangle = \delta_{ii'} + \delta_{jj'},\quad \text{for } (i,j)\ne (i',j'),
\]
with respect to the usual inner product on class functions with an orthonormal basis consisting
of irreducible characters.  
The conclusion is that  the virtual characters in each column of the matrix have a common irreducible character as constituent
with a common sign.

A moment's reflection reveals that it is enough to show the conclusion
for arbitrary $4\times 2$ submatrices.  On each $4\times 2$ block, up
to renaming the irreducible characters, a finite enumeration gives all
ways to express the entries $\beta_{ij}$ as a signed combination of
three irreducibles.  Thus, the proof reduces to a case analysis.
Symmetry arguments further reduce the number of cases.  The case
analysis is done by hand in~\cite{peterfalvi2000character}.  In the
formal proof, Th\'ery programmed the claim into a quantifier-free
problem with uninterpreted symbols in an SMT solver.  The SMT solution
was then transcribed into a formal proof in
Coq~\cite{gonthier2013machine}.  We list a code snippet from the
formal proof of the Odd Order theorem.  
The first line of code asserts unsatisfiability:  no irreducible character
$\chi_1$ appears with positive coefficient in $\beta_{ii}$ and
negative coefficient in $\beta_{ji}$.

\begin{lstlisting}[keepspaces=true,stringstyle=\tt,basicstyle=\small,frame=single,framesep=8pt,mathescape,morekeywords={Let},columns=flexible]
Let unsat_J : unsat $\models$ & x1 in b11 & -x1 in b21.
Let unsat_II: unsat $\models$ & x1, x2 in b11 & x1, x2 in b21.
\end{lstlisting}

%%%

% http://coqfinitgroup.gforge.inria.fr/smt/th3_5.smt
% http://coqfinitgroup.gforge.inria.fr/doc/PFsection3.html
% http://coqfinitgroup.gforge.inria.fr/doc/index.html

\subsection{Other forms of automation}

Automation in proof assistants takes many forms, including the
evaluation of arithmetic expressions, the verification of polynomial and
vector space identities, and decision procedures for linear real
arithmetic.  

Gr\"obner basis algorithms provide a particularly useful tool to prove
general ring identities. If $S$ is any system of equalities and
inequalities of polynomials that holds in every integral domain of
characteristic zero, then it can be transformed into a Gr\"obner basis
problem over $\ring{Q}$.  Buchberger's algorithm has been
implemented in many of the major proof assistants.  For example, one
line of code suffices to generate a formal proof of the following isogeny of
elliptic curves over $\ring{R}$.  The symbol \verb!&! denotes the function
embedding $\ring{N}$ into $\ring{R}$.

\begin{lstlisting}[keepspaces=true,stringstyle=\tt,basicstyle=\small,frame=single,framesep=8pt,mathescape,morekeywords={None},columns=flexible]
 $\vdash$   $a'$ = &2 * $a$ $\land$  $b'$ = $a$ * $a$ - &4 * $b$ $\land$ 
     $x_2$ * $y_1$ = $x_1$ $\land$  $y_2$ * $y_1^2$ = &1 - $b$ * $x_1^4$ $\land$
     $y_1^2$ = &1 + $a$ * $x_1^2$ + $b$ * $x_1^4$ 
     $\Longrightarrow$ $y_2^2$ = &1 - $a'$ * $x_2^2$ + $b'$ * $x_2^4$
\end{lstlisting}

Other powerful forms of automation in
proof assistants include the certification of linear programs and
nonlinear inequalities over the real numbers~\cite{Solovyev-thesis}.

\section{Final Remarks}

The aim of this report has been to describe some of the recent
developments in formal proofs.  Space and time do not permit a
comprehensive survey, but in this final section, I briefly mention a
few other projects.

\subsection{Homotopy Type Theory}

My report will be directly followed by a report by Coquand on
dependent types and the univalence axiom, so I will be brief in my
remarks on homotopy type theory.

Homotopy type theory (HoTT) is a foundational system for mathematics
that includes dependent type theory, the univalence axiom, and higher
inductive types.  Introductions to homotopy type theory can be found
at \cite{hottbook}, \cite{pelayo2012homotopy}.  For models of HoTT,
see \cite{awodey2007homotopy}, \cite{kapulkin2012simplicial}.  HoTT
has set quotients and function extensionality, giving remedies to some
of Coq's nuisances.

It goes without saying that as mathematicians, we construct the ground
on which we stand; the foundations of mathematics are of our choosing,
subject to only mild constraints such as plausible consistency,
expressive power, and a community of users.  In particular, nothing
but our own limited imaginations prevents us from relocating the
foundations much closer to home.

By being a foundational system that is close to the actual practice of
homotopy theory, HoTT makes the formalization of this branch of
mathematics surprisingly refreshing.  In the last two years many new
formal proofs and constructions have been obtained: loop spaces,
computations of various fundamental groups of spheres, the Freudenthal
suspension theorem, the Seifert-van Kampen theorem, construction of
Eilenberg-Mac Lane spaces~\cite{licataeilenberg}, and the
Blakers-Massey theorem. Formalization of these results in other
systems would have been much more labor intensive.  A new line of
research develops homotopy theory within HoTT foundations.
% See Licata https://github.com/dlicata335/hott-agda, homotopy directory.

As an example, we list Grayson's code that constructs the classifying
space as the type of a torsor in HoTT~\cite{Ktheory}.  We include his
proof that the classifying space $BG$ is connected.  I challenge any
other system to pass from the foundations of math to classifying
spaces so directly and elegantly!

\begin{lstlisting}[keepspaces=true,stringstyle=\tt,basicstyle=\small,frame=single,framesep=8pt,mathescape,morekeywords={Definition,Lemma,Proof,Defined},columns=flexible]
Definition ClassifyingSpace G := pointedType (Torsor G) (trivialTorsor G).
Definition E := PointedTorsor.
Definition B := ClassifyingSpace.
Definition $\pi$ {G:gr} := underlyingTorsor : E G -> B G.

Lemma isconnBG (G:gr) : isconnected (B G).
Proof. intros. apply (base_connected (trivialTorsor _)).
  intros X. apply (squash_to_prop (torsor_nonempty X)). { apply propproperty. }
  intros x. apply hinhpr. exact (torsor_eqweq_to_path (triviality_isomorphism X x)). 
Defined.
\end{lstlisting}
% https://github.com/UniMath/UniMath/blob/master/UniMath/Ktheory/GroupAction.v

\subsection{Bourbaki on formalization}

Over the past generation, the mantle for Bourbaki-style mathematics
has passed to the formal proof community, in the way it deliberates
carefully on matters of notation and terminology, finds the appropriate
level of generalization of concepts, and situates
different branches of mathematics within a coherent framework.

The opening quote claims that formalized mathematics is absolutely
unrealizable.  Bourbaki objected that formal proofs are too long
(``{\it la moindre d\'emonstration \ldots exigerait d\'ej\`a des
  centaines de signes}''), that it would be a burden to forego the
convenience of abuses of notation, and that they do not leave room for
useful metamathematical arguments and
abbreviations~\cite{bourbaki1966theorie}.

Bourbaki is correct in the strict sense that no human artifact is
absolutely trustworthy and that the standards of mathematics evolve in
a historical process, according to available technology.
Nevertheless, the technological barriers hindering formalization have
fallen one after another.  Today, computer verifications that check
millions of inferences are routine.  As Gonthier has convincingly
shown in the Odd Order theorem project, many abuses of notation can
actually be described by precise rules and implemented as algorithms,
making the term {\it abuse of notation} really something of a
misnomer, and allowing mathematicians to work formally with
customary notation.  Finally, the trend over the past decades has been
to move more and more features out of the metatheory and into the
theory by making use of features of higher-order logic and reflection.  In
particular, it is now standard to treat abbreviations and definitions
as part of the logic itself rather than metatheory.

\subsection{Future work}

This report has described three major projects in the world of formal
proofs: trustworthy systems with HOL, advanced mathematical theorems
formalized in Coq, and increased automation.

We are still far from having an automated mathematical journal referee
system, but close enough to propose this as a realistic research
program.  Already some 10\% of all papers of the Principles of
Programming Languages (POPL) symposium in computer science are
completely formalized \cite{SewPOPL2014}.  Other recent research automates the
translation of mathematical prose from English into a
computer-parsable form with semantic content
\cite{ganesalingam2013language}.  As these technologies develop, we
may anticipate the day when the precise formal statements of
mathematical theorems may be extracted from the prose.  Once
sufficiently many statements from the natural language proof can
similarly be extracted, proof automation will take over, filling in
the remaining details, to produce a formal proof from the natural
language text.

For other surveys of formal proofs, see \cite{avigad2014formally},
\cite{Hales:2008:formal}.

\section{Appendix. Some formally verified theorems}

This appendix gives examples of some theorems that have
been successfully formalized in various proof assistants.  The purpose
of these examples is to showcase the range of what can be obtained by
current technologies.

The Four-Color theorem was formalized in Coq~\cite{gonthier2008formal}.

\begin{lstlisting}[keepspaces=true,stringstyle=\tt,basicstyle=\small,frame=single,framesep=8pt,morekeywords={Variable,Theorem,Proof,Qed},columns=flexible]
Variable R : real_model. 
Theorem four_color : (m : (map R))
     (simple_map m) -> (map_colorable (4) m). 
Proof.
    Exact (compactness_extension four_color_finite). 
Qed.
\end{lstlisting}

The elementary proof of the Prime Number Theorem by Erd\"os and
Selberg was formalized in Isabelle \cite{avigad2007formally}.  The
analytic proof by Hadamard and de la Vall\'ee Poussin was formalized
in HOL Light~\cite{harrison2009formalizing}. In the statement that
follows, the symbol \verb!&! denotes the function embedding the
natural numbers into the real numbers.

\begin{lstlisting}[keepspaces=true,stringstyle=\tt,basicstyle=\small,frame=single,framesep=8pt,mathescape,morekeywords={},columns=flexible]
  // Prime Number Theorem:
  ((\n. &(CARD {p | prime p $\land$ p <= n}) / (&n / log(&n)))
    ---> &1) sequentially
\end{lstlisting}

Here is the formal statement of the Brouwer fixed point theorem, which was formalized in HOL Light by Harrison.

\begin{lstlisting}[keepspaces=true,stringstyle=\tt,basicstyle=\small,frame=single,framesep=8pt,mathescape,morekeywords={},columns=flexible]
  $\forall$ f:real$^N$->real$^N$ s. 
  compact s $\land$ convex s $\land$ ~(s = {}) $\land$ f continuous_on s $\land$ IMAGE f s SUBSET s
  $\Longrightarrow$ $\exists$x. x IN s $\land$ f x = x
\end{lstlisting}

The formalization of the central limit theorem was carried out earlier this year in Isabelle \cite{avigad2014central}.

\begin{lstlisting}[keepspaces=true,stringstyle=\tt,basicstyle=\small,frame=single,framesep=8pt,mathescape,morekeywords={theorem,fixes,assumes,defines,shows,Variable,Theorem,Proof,Qed},columns=flexible]
theorem (in prob_space) central_limit_theorem:
  fixes 
    X :: "nat $\Rightarrow$ 'a $\Rightarrow$ real" and
    $\mu$ :: "real measure" and
    $\sigma$ :: real and
    S :: "nat $\Rightarrow$ 'a $\Rightarrow$ real"
  assumes
    X_indep: "indep_vars ($\lambda$i. borel) X UNIV" and
    X_integrable: "$\bigwedge$n. integrable M (X n)" and
    X_mean_0: "$\bigwedge$n. expectation (X n) = 0" and
    $\sigma$_pos: "$\sigma$ > 0" and
    X_square_integrable: "$\bigwedge$n. integrable M ($\lambda$x. (X n x)$^2$)" and
    X_variance: "$\bigwedge$n. variance (X n) = $\sigma^2$" and
    X_distrib: "$\bigwedge$n. distr M borel (X n) = $\mu$"
  defines
    "S n $\equiv$ $\lambda$x. $\Sigma\,$i<n. X i x"
  shows
    "weak_conv_m ($\lambda$n. distr M borel ($\lambda$x. S n x / sqrt (n * $\sigma^2$))) 
        (density lborel standard_normal_density)"
\end{lstlisting}

% https://github.com/avigad/isabelle/blob/master/Analysis/Central_Limit_Theorem.thy

The Kepler conjecture asserts that no packing of congruent balls in
$\ring{R}^3$ can have density greater than the face-centered cubic
packing.  The Kepler conjecture is a theorem whose proof relies on
many computer calculations~\cite{Hales:2006:DCG}.  The 
Kepler conjecture has been formalized\footnote{This project was completed on August 10, 2014.} in a combination of the HOL
Light and Isabelle proof assistants~\cite{website:FlyspeckProject}.  This formalization has
been a large collaborative effort. 

\begin{lstlisting}[keepspaces=true,stringstyle=\tt,basicstyle=\small,frame=single,framesep=8pt,framextopmargin=10pt,mathescape,morekeywords={Variable,Theorem,Proof,Qed},columns=flexible]
 $\vdash$ the_nonlinear_inequalities

 $\vdash$ import_tame_classification $\land$      
    the_nonlinear_inequalities $\land$
    $\Longrightarrow$ the_kepler_conjecture

 $\vdash$ the_kepler_conjecture $\Longleftrightarrow$
     ($\forall$V. packing V
            $\Longrightarrow$ ($\exists$c. $\forall$r. &1 $\le$ r
                         $\Longrightarrow$ &(CARD(V $\cap$ ball(vec 0,r))) $\le$
                             $\pi$ * r$^3$ / sqrt(&18) + c * r$^2$))
\end{lstlisting}

\section{Appendix. The inference rules of HOL}

The type system, the terms, sequents, and axioms of HOL have been
described in the first section of this report.  For reference purposes, we list
all the inference rules of HOL, as formulated by Harrison.  We borrow from
the presentation in \cite{Hales:2008:formal}.

The system has ten inference rules and a mechanism for defining new
constants and types. Each inference rule is depicted as a fraction;
the inputs to the rule are listed in the numerator, and the output in
the denominator.  The inputs to the rules may be terms or other
theorems.  In the following rules, we assume that $p$ and $p'$ are
equal, up to a renaming of bound variables, and similarly for $b$ and
$b'$.  (Such terms are called $\alpha$-equivalent.)

\quad On first reading, ignore the assumption lists $\Gamma$ and
$\Delta$. They propagate silently through the inference rules, but are
really not what the rules are about.  When taking the union
$\Gamma\cup\Delta$, $\alpha$-equivalent assumptions should be
considered as equal.  \smallskip

%\protect\twocolumn

%\framebox{
%\vbox{
\smallskip

\noindent
Equality is reflexive:
$$
\frac{a}{\vdash a=a}
$$

\noindent
Equality is transitive:
$$
\frac{\Gamma \vdash a=b;~~~\Delta\vdash b'=c}
{\Gamma\cup\Delta \vdash a=c}
$$

\noindent
Equal functions applied to equals are equal:
$$
\frac{\Gamma\vdash f=g;~~~\Delta\vdash a=b}
{\Gamma\cup\Delta\vdash f\hskip0.1em a = g\hskip0.1em b}
$$

\noindent
The rule of abstraction holds. Equal function bodies
give equal functions:
$$
\frac{x;~~~\Gamma\vdash a=b}
{\Gamma \vdash \lambda x.\ a~=\lambda x.\ b}
~\hbox{\ (if $x$ is not free in $\Gamma$)}
$$

\noindent
The application of the function $x\mapsto a$ to $x$ gives $a$:
$$
\frac{(\lambda x.~a)\, x}
{\vdash (\lambda x.\ a)\, x = a}
$$

%}}

%\pagebreak

%\framebox{\vbox{

\noindent
%%ASSUME
Assume $p$, then conclude $p$:
$$
\frac{p\tc bool}
{p \vdash p}
$$

\noindent
An `equality-based' rule of modus ponens holds:
%% EQ_MP
$$
\frac{\Gamma\vdash p;~~~\Delta \vdash p'=q}
{\Gamma\cup \Delta \vdash q}
$$
% equivalent Harrison's but order reversed.

\noindent
If the assumption $q$ gives conclusion $p$ and the assumption $p$
gives $q$, then they are equivalent:
$$
\frac{\Gamma \vdash p;~~~\Delta\vdash q}
{(\Gamma\setminus q)\cup (\Delta\setminus p)
\vdash p=q}
$$

Type variable substitution holds.  If arbitrary types are substituted
in parallel for type variables in a sequent, a theorem results.  Term
variable substitution holds.  If arbitrary terms are substituted in
parallel for term variables in a sequent, a theorem results.

%}}

%\subsubsection{Acknowledgements}

{\it Acknowledgements.}   I would like to thank the many people who helped me
during the preparation of this report, particularly Jeremy Avigad, John Harrison, Chris Kapulkin, 
W\"oden Kusner,  and Josef Urban.  I wish to give special thanks to Assia Mahboubi for answering many questions related
to Coq and the Odd Order theorem.
I would also like to thank the many speakers and participants at the special program on {\it Semantics of proofs and certified 
math} at IHP.

\raggedright
\bibliographystyle{plain} %plainnat
\bibliography{/Users/flyspeck/Desktop/googlecode/flyspeck/usr/thales/latex/bibliography/all}

\begin{thebibliography}{10}

\bibitem{adams2010introducing}
Mark Adams.
\newblock Introducing {HOL Zero}.
\newblock In {\em Mathematical Software--ICMS 2010}, pages 142--143. Springer,
  2010.

\bibitem{alama2014premise}
Jesse Alama, Tom Heskes, Daniel K{\"u}hlwein, Evgeni Tsivtsivadze, and Josef
  Urban.
\newblock Premise selection for mathematics by corpus analysis and kernel
  methods.
\newblock {\em Journal of Automated Reasoning}, 52(2):191--213, 2014.

\bibitem{aschbacher2000finite}
Michael Aschbacher.
\newblock {\em Finite group theory}, volume~10.
\newblock Cambridge University Press, 2000.

\bibitem{aschbacher2004status}
Michael Aschbacher.
\newblock The status of the classification of the finite simple groups.
\newblock {\em Notices of the AMS}, 51(7):736--740, 2004.

\bibitem{aschbacher2004classification}
Michael Aschbacher and Stephen~D Smith.
\newblock {\em The classification of quasithin groups}.
\newblock American Mathematical Soc., 2004.

\bibitem{avigad2007formally}
Jeremy Avigad, Kevin Donnelly, David Gray, and Paul Raff.
\newblock A formally verified proof of the prime number theorem.
\newblock {\em ACM Transactions on Computational Logic (TOCL)}, 9(1):2, 2007.

\bibitem{avigad2014formally}
Jeremy Avigad and John Harrison.
\newblock Formally verified mathematics.
\newblock {\em Communications of the ACM}, 57(4):66--75, 2014.

\bibitem{avigad2014central}
Jeremy Avigad, Johannes H{\"o}lzl, and Luke Serafin.
\newblock A formally verified proof of the central limit theorem.
\newblock {\em arXiv preprint arXiv:1405.7012}, 2014.

\bibitem{awodey2007homotopy}
Steve Awodey and Michael~A Warren.
\newblock Homotopy theoretic models of identity types.
\newblock {\em Math. Proc. Cambridge Philos. Soc.}, 146:45--55, 2000.

\bibitem{bachmair2001resolution}
Leo Bachmair and Harald Ganzinger.
\newblock Resolution theorem proving.
\newblock {\em Handbook of automated reasoning}, 1:19--99, 2001.

\bibitem{barras2010sets}
Bruno Barras.
\newblock Sets in {Coq}, {Coq} in sets.
\newblock {\em Journal of Formalized Reasoning}, 3(1):29--48, 2010.

\bibitem{barrett2009satisfiability}
Clark~W Barrett, Roberto Sebastiani, Sanjit~A Seshia, and Cesare Tinelli.
\newblock Satisfiability modulo theories.
\newblock {\em Handbook of satisfiability}, 185:825--885, 2009.

\bibitem{bender1994local}
Helmut Bender, George Glauberman, and Walter Carlip.
\newblock {\em Local analysis for the odd order theorem}, volume 188.
\newblock Cambridge University Press, 1994.

\bibitem{Coq}
Y.~Bertot and P.~Cast\'eran.
\newblock {\em Interactive Theorem Proving and Program Development {Coq'Art}:
  The Calculus of Inductive Constructions}.
\newblock Springer, 2004.

\bibitem{bohme2010sledgehammer}
Sascha B{\"o}hme and Tobias Nipkow.
\newblock Sledgehammer: judgement day.
\newblock In {\em Automated Reasoning}, pages 107--121. Springer, 2010.

\bibitem{bourbaki1966theorie}
Nicolas Bourbaki.
\newblock {\em Th{\'e}orie des ensembles}.
\newblock Springer, 2006.
\newblock (third edition, Hermann, 1966).

\bibitem{cohen2010formal}
Cyril Cohen and Assia Mahboubi.
\newblock A formal quantifier elimination for algebraically closed fields.
\newblock In {\em Intelligent Computer Mathematics}, pages 189--203. Springer,
  2010.

\bibitem{CiC}
Thierry Coquand and G{\'e}rard~P. Huet.
\newblock The calculus of constructions.
\newblock {\em Inf. Comput.}, 76(2/3):95--120, 1988.

\bibitem{FHT}
Walter Feit, Marshall Hall~Jr., and John~G. Thompson.
\newblock Finite groups in which the centralizer of any non-identity element is
  nilpotent.
\newblock {\em Math Z.}, 74:1--17, 1960.

\bibitem{FT}
Walter Feit and John~G Thompson.
\newblock {\em Solvability of groups of odd order}.
\newblock Pacific Journal of Mathematics, 1963.

\bibitem{ganesalingam2013language}
Mohan Ganesalingam.
\newblock {\em The language of mathematics}.
\newblock Springer, 2013.

\bibitem{sel4}
Kevin Elphinstone Toby Murray Thomas Sewell Rafal~Kolanski Gerwin~Klein,
  June~Andronick and Gernot Heiser.
\newblock Comprehensive formal verification of an {OS} microkernel.
\newblock {\em ACM Transactions on Computer Systems}, 32(1):2:1--2:70,
  February, 2014.

\bibitem{glauberman1999new}
George Glauberman.
\newblock A new look at the {Feit-Thompson} odd order theorem.
\newblock {\em Mat. Contemp}, 16:73--92, 1999.

\bibitem{gonthier2008formal}
Georges Gonthier.
\newblock Formal proof--the four-color theorem.
\newblock {\em Notices of the AMS}, 55(11):1382--1393, 2008.

\bibitem{gonthier2011point}
Georges Gonthier.
\newblock Point-free, set-free concrete linear algebra.
\newblock In {\em Interactive Theorem Proving}, pages 103--118. Springer, 2011.

\bibitem{gonthier2013machine}
Georges Gonthier, Andrea Asperti, Jeremy Avigad, Yves Bertot, Cyril Cohen,
  Fran{\c{c}}ois Garillot, St{\'e}phane Le~Roux, Assia Mahboubi, Russell
  O'Connor, Sidi~Ould Biha, et~al.
\newblock A machine-checked proof of the odd order theorem.
\newblock In {\em Interactive Theorem Proving}, pages 163--179. Springer, 2013.

\bibitem{Gor}
Mike Gordon.
\newblock From {LCF} to {HOL}: a short history.
\newblock {\em Proof, language, and interaction: essays in honour of Robin
  Milner}, pages 169--185, 2000.

\bibitem{gorenstein2007finite}
Daniel Gorenstein.
\newblock {\em Finite groups}, volume 301.
\newblock American Mathematical Soc., 2007.

\bibitem{Ktheory}
Daniel Grayson.
\newblock K-theory: Group action, 2014.
\newblock
  \url{https://github.com/UniMath/UniMath/blob/master/UniMath/Ktheory/GroupAction.v}.

\bibitem{Hales:2008:formal}
Thomas~C. Hales.
\newblock Formal proof.
\newblock {\em Notices of the AMS}, 55(11):1370--1380, December 2008.

\bibitem{Hales:2006:DCG}
Thomas~C. Hales and Samuel~P. Ferguson.
\newblock The {Kepler} conjecture.
\newblock {\em Discrete and Computational Geometry}, 36(1):1--269, 2006.

\bibitem{website:FlyspeckProject}
Thomas~C. Hales, Alexey Solovyev, and Hoang Le~Truong et~al.
\newblock {The Flyspeck Project}, 2014.
\newblock \url{http://code.google.com/p/flyspeck}.

\bibitem{HaSelf}
John Harrison.
\newblock Towards self-verification of {HOL} {Light}.
\newblock In {\em proceedings of IJCAR 2006}, volume 4130 of {\em Lect.\ Notes
  in Comp.\ Sci.}, pages 177--191. springer, 2006.

\bibitem{harrison2009formalizing}
John Harrison.
\newblock Formalizing an analytic proof of the prime number theorem.
\newblock {\em Journal of Automated Reasoning}, 43(3):243--261, 2009.

\bibitem{Ha09}
John Harrison.
\newblock {\em Handbook of Practical Logic and Automated Reasoning}.
\newblock Cambridge University Press, 2009.

\bibitem{harrison2009hol}
John Harrison.
\newblock {HOL} {Light}: An overview.
\newblock In {\em Theorem Proving in Higher Order Logics}, pages 60--66.
  Springer, 2009.

\bibitem{HOLL}
John Harrison.
\newblock The {HOL} {L}ight theorem prover, 2010.
\newblock \url{http://www.cl.cam.ac.uk/~jrh13/hol-light/index.html}.

\bibitem{isaacs2013character}
I.~Martin Isaacs.
\newblock {\em Character theory of finite groups}.
\newblock Courier Dover Publications, 2013.

\bibitem{DBLP:journals/corr/KaliszykU13b}
Cezary Kaliszyk and Josef Urban.
\newblock {MizAR} 40 for {Mizar} 40.
\newblock {\em CoRR}, abs/1310.2805, 2013.

\bibitem{kaliszyk2013automated}
Cezary Kaliszyk and Josef Urban.
\newblock {HOL(y)Hammer}: Online {ATP} service for {HOL Light}.
\newblock {\em Mathematics in Computer Science}, 2014.
\newblock In press, \url{http://arxiv.org/abs/1309.4962}.

\bibitem{kaliszyk2012learning}
Cezary Kaliszyk and Josef Urban.
\newblock Learning-assisted automated reasoning with {F}lyspeck.
\newblock {\em Journal of Automated Reasoning}, 2014.
\newblock \url{http://dx.doi.org/10.1007/s10817-014-9303-3}, arXiv:1211.7012.

\bibitem{kaliszyk2014learning}
Cezary Kaliszyk and Josef Urban.
\newblock Learning-assisted theorem proving with millions of lemmas.
\newblock {\em Journal of Symbolic Computation}, 2014.
\newblock In press, \url{http://arxiv.org/abs/1402.3578}, arXiv:1402.3578.

\bibitem{kapulkin2012simplicial}
Chris Kapulkin, Peter~LeFanu Lumsdaine, and Vladimir Voevodsky.
\newblock The simplicial model of univalent foundations.
\newblock {\em arXiv preprint arXiv:1211.2851}, 2012.

\bibitem{kovacs2013first}
Laura Kov{\'a}cs and Andrei Voronkov.
\newblock First-order theorem proving and {Vampire}.
\newblock In {\em Computer Aided Verification}, pages 1--35. Springer, 2013.

\bibitem{kumar2014hol}
Ramana Kumar, Rob Arthan, Magnus~O Myreen, and Scott Owens.
\newblock {HOL} with definitions: Semantics, soundness, and a verified
  implementation.
\newblock {\em Interactive Theorem Proving (ITP), LNCS. Springer}, 2014.

\bibitem{CakeML}
Ramana Kumar, Magnus~O Myreen, Michael Norrish, and Scott Owens.
\newblock {CakeML}: A verified implementation of {ML}.
\newblock In {\em Proceedings of the 41st annual ACM SIGPLAN-SIGACT symposium
  on Principles of programming languages}, pages 179--192. ACM, 2014.

\bibitem{CC}
Xavier Leroy.
\newblock Formal certification of a compiler back-end, or: programming a
  compiler with a proof assistant.
\newblock {\em 33rd ACM symposium on Principles of Programming Languages},
  pages 42--54, 2006.
\newblock http://compcert.inria.fr/.

\bibitem{licataeilenberg}
Daniel~R. Licata and Eric Finster.
\newblock Eilenberg-{Mac Lane} spaces in homotopy type theory.
\newblock {\em CSL-LICS 2014}, pages 1--9, 2014.

\bibitem{marques2009conflict}
Joao Marques-Silva, In{\^e}s Lynce, and Sharad Malik.
\newblock Conflict-driven clause learning sat solvers.
\newblock {\em Handbook of satisfiability}, 185:131--153, 2009.

\bibitem{Mizar}
Mizar home page, 2013.
\newblock http://mizar.org/.

\bibitem{myreen2012reflective}
Magnus~O Myreen and Jared Davis.
\newblock The reflective {Milawa} theorem prover is sound.
\newblock 2012.

\bibitem{myreen2013proof}
Magnus~O Myreen and Scott Owens.
\newblock Proof-producing translation of higher-order logic into pure and
  stateful {ML}.
\newblock {\em Journal of Functional Programming}, pages 1--32, 2014.

\bibitem{myreen2013steps}
Magnus~O Myreen, Scott Owens, and Ramana Kumar.
\newblock Steps towards verified implementations of {HOL Light}.
\newblock In {\em Interactive Theorem Proving}, pages 490--495. Springer, 2013.

\bibitem{Paar}
Lawrence Paulson.
\newblock Three years of experience with sledgehammer, a practical link between
  automatic and interactive theorem provers.
\newblock Paar-2010, Practical Aspects of Automated Reasoning, 2010.

\bibitem{pelayo2012homotopy}
{\'A}lvaro Pelayo and Michael~A Warren.
\newblock Homotopy type theory and {Voevodsky}'s univalent foundations.
\newblock {\em Bulletin of the A.M.S.}, 2014.
\newblock to appear, arXiv preprint arXiv:1210.5658.

\bibitem{peterfalvi2000character}
Thomas Peterfalvi.
\newblock {\em Character theory for the odd order theorem}, volume 272.
\newblock Cambridge University Press, 2000.

\bibitem{raussen2004interview}
Martin Raussen and Christian Skau.
\newblock Interview with {Jean-Pierre Serre}.
\newblock {\em Notices of AMS}, 51(2):210--214, 2004.

\bibitem{riazanov2002design}
Alexandre Riazanov and Andrei Voronkov.
\newblock The design and implementation of vampire.
\newblock {\em AI communications}, 15(2):91--110, 2002.

\bibitem{sarkar2009semantics}
Susmit Sarkar, Peter Sewell, Francesco~Zappa Nardelli, Scott Owens, Tom Ridge,
  Thomas Braibant, Magnus~O Myreen, and Jade Alglave.
\newblock The semantics of {x86-CC} multiprocessor machine code.
\newblock In {\em ACM SIGPLAN Notices}, volume~44, pages 379--391. ACM, 2009.

\bibitem{Sch02-AICOMM}
Stephan Schulz.
\newblock {E - A Brainiac Theorem Prover}.
\newblock {\em AI Commun.}, 15(2-3):111--126, 2002.

\bibitem{SewPOPL2014}
Peter Sewell.
\newblock {POPL} 2014 program chair's report.
\newblock \url{http://www.cl.cam.ac.uk/~pes20/}, 2014.

\bibitem{Sol01}
Ronald Soloman.
\newblock A brief history of the classification of the finite simple groups.
\newblock {\em Bulletin AMS}, 38(3):315--352, 2001.

\bibitem{Solovyev-thesis}
A.~Solovyev.
\newblock Formal methods and computations, 2012.
\newblock thesis, University of Pittsburgh,
  \url{http://d-scholarship.pitt.edu/16721/}.

\bibitem{Coq2014}
{The Coq development team}.
\newblock The {Coq} proof assistant: reference manual.
\newblock \url{http://coq.inria.fr/refman/}, Version v8.4pl4, 2014.

\bibitem{thompson1968nonsolvable}
John~G. Thompson.
\newblock Nonsolvable finite groups all of whose local subgroups are solvable.
\newblock {\em Bulletin of the American Mathematical Society}, 74(3):383--437,
  1968.

\bibitem{hottbook}
The {Univalent Foundations Program}.
\newblock {\em Homotopy Type Theory: Univalent Foundations of Mathematics}.
\newblock \url{http://homotopytypetheory.org/book}, Institute for Advanced
  Study, 2013.

\bibitem{urban2010evaluation}
Josef Urban, Krystof Hoder, and Andrei Voronkov.
\newblock Evaluation of automated theorem proving on the {Mizar} mathematical
  library.
\newblock In {\em Mathematical Software--ICMS 2010}, pages 155--166. Springer,
  2010.

\bibitem{WeidenbachDFKSW09}
Christoph Weidenbach, Dilyana Dimova, Arnaud Fietzke, Rohit Kumar, Martin Suda,
  and Patrick Wischnewski.
\newblock {SPASS Version 3.5}.
\newblock In {\em CADE}, pages 140--145, 2009.

\bibitem{WikiLong}
Wikipedia list of longest proofs, accessed 11/2013.
\newblock \url{http://en.wikipedia.org/wiki/Longest_proof}.

\end{thebibliography}
%\bibliography{/Users/thomashales/Desktop/googlecode/flyspeck/usr/thales/latex/bibliography/all}

\end{document}